\let\NAT@parse\undefined
\begin{document}

\begin{abstract}
Ensuring safety in cyber-physical systems (CPSs) is a critical challenge, especially when system models are difficult to obtain or cannot be fully trusted due to uncertainty, modeling errors, or environmental disturbances. Traditional model-based approaches rely on precise system dynamics, which may not be available in real-world scenarios. To address this, we propose a data-driven safety verification framework that leverages matrix zonotopes and barrier certificates to verify system safety directly from noisy data. Instead of trusting a single unreliable model, we construct a set of models that capture all possible system dynamics that align with the observed data, ensuring that the true system model is always contained within this set. This model set is compactly represented using matrix zonotopes, enabling efficient computation and propagation of uncertainty. By integrating this representation into a barrier certificate framework, we establish rigorous safety guarantees without requiring an explicit system model. Numerical experiments demonstrate the effectiveness of our approach in verifying safety for dynamical systems with unknown models, showcasing its potential for real-world CPS applications.
\end{abstract}

\title[Data-Driven Safety Verification using Barrier Certificates and Matrix Zonotopes]{Data-Driven Safety Verification using Barrier Certificates \\ and Matrix Zonotopes}

\author[mohammed adib oumer]{Mohammed Adib Oumer$^1$} 
\author[amr alanwar]{Amr Alanwar$^2$} 
\author[majid zamani]{Majid Zamani$^1$}

\address{$^1$Department of Computer Science at the University of Colorado, Boulder, CO, USA.}
\email{\{mohammed.oumer,~majid.zamani\}@colorado.edu}
\urladdr{https://www.hyconsys.com/members/moumer/}
\urladdr{ https://www.hyconsys.com/members/mzamani/}

\address{$^2$TUM School of
Computation, Information and Technology at the Technical University of Munich, Heilbronn, Germany.}
\email{alanwar@tum.de}
\urladdr{https://www.professoren.tum.de/en/alanwar-amr}

\maketitle

\section{Introduction}
 
Ensuring safety is a fundamental requirement in the design and operation of cyber-physical systems (CPSs), particularly in safety-critical applications such as autonomous vehicles, robotics, power grids, and industrial automation~\cite{Althoff2010PhD}. Barrier certificates have emerged as a powerful and widely adopted tool for formal safety verification in dynamical systems~\cite{prajna2004safety}. These certificates are Lyapunov-like functions that ensure that system trajectories remain within a predefined safe region and do not reach an unsafe region. 
A barrier certificate is a real-valued function that is nonpositive over the initial states, positive over the unsafe states, and nonincreasing with transitions. 
Thus, such a certificate guarantees safety as its zero-level set separates the reachable and unsafe states. 
Moreover, its zero-sublevel set over-approximates the set of reachable states.
By formulating safety conditions as inequalities involving the system’s dynamics, a barrier certificate provides a computationally tractable alternative to reachability analysis. However, traditional approaches~\cite{prajna2004safety,ames2019control,jagtap2020formal,wang2018permissive,prajna2007framework,prajna2004stochastic} assume explicit knowledge of the system model, significantly limiting their applicability in scenarios where the true system dynamics are unknown or uncertain.

To overcome this limitation, recent research has shifted towards data-driven approaches that construct barrier certificates directly from observed system behavior. The authors in~\cite{BCFundamentalLemma} introduced a method for synthesizing barrier certificates using the Fundamental Lemma, which allows for the derivation of system properties from input-output data. While this approach removes the need for an explicit system model, it inherently assumes noiseless data, as the Fundamental Lemma relies on exact trajectory observations to infer system behavior. Consequently, its applicability is limited in real-world scenarios, where measurement noise, disturbances, and modeling uncertainties are inevitable. The study in~\cite{salamati2022data,salamati2024data} explores data-driven safety verification of unknown discrete-time stochastic systems, presenting a novel approach that formulates the computation of barrier certificates as a robust convex optimization problem. This formulation leverages trajectory-based sampling to construct constraints, ensuring that the derived barrier certificates accurately capture system behavior under uncertainty. 
The paper in~\cite{schon2024data} transforms probabilistic barrier certificate constraints into a data-driven optimization framework by defining an ambiguous set of possible transition kernels. To tackle the resulting problem, it leverages sum-of-squares (SOS) optimization~\cite{SOS}. 

Recent advancements in data-driven reachability analysis have provided a promising alternative to traditional barrier certificate methods. A variety of approaches have emerged to tackle the problem of the unknown model from different perspectives. One approach in~\cite{Devonport2020} introduced two probabilistic methods: one reformulated reachability as a classification problem, while the other applied Monte Carlo sampling to estimate reachable sets. Similarly,~\cite{meas-driven} proposed a hybrid strategy combining measurement-driven and model-based formal verification techniques. Other works have focused on refining reachability estimates through active learning~\cite{conf:activelearning1,conf:activelearning2}. A probabilistic framework for general nonlinear systems, based on Christoffel function level sets, was developed in~\cite{conf:murat_christoffel}. More recently, neural network-based approaches have gained traction. In~\cite{lars2023Conformal}, feedforward neural networks~\cite{feedforwardNN} were employed to approximate reachable sets, with uncertainty quantified via conformal inference~\cite{Conformalbook2005,Conformal2014}. Additionally, kernel density estimation, accelerated by the fast Fourier transform, has been applied to model uncertainties and compute probabilistic reachable sets~\cite{kernelprobRA}. A distinct data-driven framework for forward stochastic reachability analysis was introduced in~\cite{BootstrappedDDSRA}. This method estimates the evolution of the state’s probability density function using trajectory data, which is then used to construct a Gaussian mixture model. Although these probabilistic approaches enhance estimation accuracy with larger datasets, they still lack robust safety guarantees. In particular, rare but safety-critical events remain challenging to capture, limiting the reliability of purely probabilistic reachability methods.

Conversely, researchers have also investigated robust overapproximations of reachability using noise-free data. In~\cite{conf:onthefly}, interval Taylor-based techniques were applied to systems modeled by differential inclusions, with later extensions incorporating noisy data. Another notable approach, proposed in~\cite{conf:tomlin_nonlin_reach}, integrated partial model knowledge with data-driven learning to capture state-dependent uncertainties, relying on the assumption that the unknown dynamics conformed to a known bounding set. Additionally,~\cite{LuciaBackward} explored the computation of robust backward reachable sets for unknown linear systems, leveraging noisy data to ensure safety guarantees despite uncertainty. The method proposed in~\cite{alanwar2021data,Alanwar2023Datadriven} formulates a set of possible system models encapsulated within a matrix zonotope, guaranteeing that every model consistent with the noisy data and prescribed noise bounds is included. However, this approach often results in conservative reachable sets for long horizons. This conservatism can be partially mitigated if there is additional side information available~\cite{alanwar2022enhancing}. A key approach to providing infinite horizon safety guarantees is through the use of barrier certificates.

In this paper, we build on the results in~\cite{alanwar2021data,Alanwar2023Datadriven} to compute a set of models for a linear control system and leverage barrier certificates as a discretization-free approach to provide safety guarantees over an infinite horizon for this given set of models. 
We use sum-of-squares (SOS) optimization to search for these certificates. To tackle the presence of a set of models, we start by formulating a computationally heavy condition for safety directly from the computed model parameters. 
Normally, the transition function is a function of the state, input, and noise parameters in some compact set. 
In the condition we formulate, the transition function is additionally a function of the model parameters that are also in some compact set. 
This dramatically increases the total number of parameters we need to keep track of, particularly when the barrier certificate is applied over the transition function. 
The larger memory usage makes it harder to run the optimization problem successfully. 
Thus, we address this computational challenge of the given condition by stating an equivalent condition that exploits the matrix zonotope representation of the models of the system and scales relatively well with the dimension of the system. 
We substantiate the benefit of the latter condition by implementing both conditions and investigating their performance for systems with different dimensions. 
As the models of the system can be made to be less conservative using available side information, our approach can also be integrated with the enhanced model parameters. 
Moreover, as the computation of matrix zonotope over-approximation of system model parameters has been generalized for polynomial nonlinear control systems, our approach seems promising to be adopted for such systems with appropriate updates. The codes to recreate our findings are publicly available.
\footnote{
[Online]. Available: \url{https://github.com/maoumer/Data-Driven-Safety-Verification-using-Barrier-Certificates-and-Matrix-Zonotopes}
}

The remainder of the paper is organized as follows. Section~\ref{sec:preliminaries} introduces preliminaries, notation, and problem formulation used in the paper. Section~\ref{sec:DDSV} covers our approach of safety verification via barrier certificates for the data-driven system models while section~\ref{sec:computation} highlights a computation method of finding the barrier certificates via SOS programming. Section~\ref{sec:numerical-simulations} illustrates numerical simulations. Finally, Section~\ref{sec:conclusion} offers conclusive remarks.

\section{Preliminaries and Problem Formulation}\label{sec:preliminaries}

We start with the notations used in the paper, followed by the preliminaries and problem formulations.

\subsection{Notations}

The set of natural numbers, nonnegative integers, positive reals, the real $n$-dimensional space, and the space of real $n \times m$ matrices are denoted by $\N$, $\Z_{\geq 0}$, $\R_{>0}$, $\R^n$, and $\R^{n \times m}$, respectively. For a matrix $A$, $(A)_j$ denotes its $j$th column, $A^{\top}$ represents its transpose, and $A^\dagger$ its Moore-Penrose pseudoinverse. The operator $\diag \cdot$ constructs a diagonal matrix from its arguments. The vectors $\1_n$ and $\zero_n$ are the $n$-dimensional vectors with all entries equal to $1$ and $0$, respectively. We use $\1$ and $\zero$ to represent matrices of appropriate dimensions with all entries equal to $1$ and $0$, respectively. If a system, denoted by $\mathcal{S}$, satisfies a property $\Psi$, it is written as $\mathcal{S} \models \Psi$. 
Given a variable $x$ that can have values in the range from $x_{min}$ to $x_{max}$, we use $[x_{min},x_{max}]$ to represent the interval.

\subsection{Set representation}
In this work, we utilize zonotopes as a mathematical representation of sets defined as follows.  
\begin{definition}[Zonotope~\cite{Kuhn1998}]
    Given a center $c_{\zon} \in \R^n $ and a number $\gamma_\zon\in \N$ of generator vectors $g_{\zon_i}\in\R^n$, a zonotope is defined as
    \begin{equation*}
        \zon = \Big\{ x \in \R^{n} \Big| x = c_{\zon} + \sum_{i=1}^{\gamma_\zon} \beta_i\, g_{\zon_i},-1 \leq \beta_i\leq 1 \Big\}.
    \end{equation*}
    We use the short notation $ \zon = \langle c_{\zon},G_{\zon} \rangle$ to denote a zonotope, where $G_{\zon} = [g_{\zon_1}\dots g_{\zon_{\gamma_\zon}}]\in \R^{n\times \gamma_\zon }$. 
\end{definition}

Given $L\in\R^{m\times n}$, the linear transformation of a zonotope $\zon$ is a zonotope $L\zon=\langle Lc_{\zon},LG_{\zon} \rangle$. Given two zonotopes $\zon_1 = \langle c_{\zon_1 },G_{\zon_1 } \rangle \subset \R^n$ and $\zon_2 = \langle c_{\zon_2 },G_{\zon_2 } \rangle\subset \R^n$, the Minkowski sum and the Cartesian product are respectively computed as~\cite{farjadnia2024robust}:
\allowdisplaybreaks
\begin{align*}
\zon_1 \oplus \zon_2  
    &= \Big\langle c_{\zon_1 }+ c_{\zon_1 },[G_{\zon_1 } \, \, G_{\zon_2 }] \Big\rangle,
    \\
    \zon_1 \times \zon_2 
    &= \left\langle \begin{bmatrix} c_{\zon_1}\\ 
    c_{\zon_2} \end{bmatrix},
    \begin{bmatrix} G_{\zon_1} & \zero \\ \zero & G_{\zon_2} \end{bmatrix}
   \right\rangle.
\end{align*}

A matrix zonotope is represented as $ \mzon = \langle C_{\mzon}, G_{\mzon} \rangle$, and its definition is analogous to that of a zonotope.  It is characterized by a center matrix $C_{\mzon} \in \R^{n\times m}$  and $\gamma_{\mzon} \in \N$ generator matrices $G_{\mzon} = \begin{bmatrix}G_{\mzon_1}& \cdots & G_{\mzon_{\gamma_\mzon}} \end{bmatrix}\in \R^{n\times (m\gamma_{\mzon}) }$~\cite[p.~52]{Althoff2010PhD}. 

Given a matrix zonotope $\mzon$, one can convert it to an interval matrix $\mzon_{int} = [M^{min}, M^{max}]$ where $M^{min}$ and $M^{max}$ are matrices and for $Z \in \mzon$, each component satisfies $M_{i,j}^{min} \leq Z_{i,j} \leq M_{i,j}^{max}$. This inequality can then be represented as a vector of polynomial inequalities over all elements of the matrix given by: 
\begin{align}
\label{eq:poly_inequality}
    \begin{bmatrix}
    Z_{i,j} - M_{i,j}^{min}\\
    M_{i,j}^{max} - Z_{i,j}\\
\end{bmatrix} \geq 0.
\end{align}
Note the comma to distinguish these interval matrices from the concatenation of matrices. A similar argument applies to zonotopes.

\subsection{System Dynamics}\label{sec:sysDyn}

We consider an unknown discrete-time linear  system $\mathcal{S}$ as follows:
\begin{equation}
    x_{k+1} = A_{\text{tr}} x_k + B_{\text{tr}} u_k + w_k,
\label{eq:model-linear}
\end{equation}
where $A_{\text{tr}}\in \R^{n_x \times n_x}$ and $B_{\text{tr}} \in \R^{n_x\times n_u}$ are unknown true system dynamics matrices, $x_k \in \zon_x =\langle c_{\zon_x},G_{\zon_x} \rangle \subseteq  \R^{n_x}$ and $u_k \in \zon_u = \langle c_{\zon_u},G_{\zon_u} \rangle \subseteq \R^{n_u}$ are respectively the state and the input of the system at time $k \in \mathbb{Z}_{\geq 0}$, and $w_k$ denotes the noise. 
Here, $\zon_x$ and $\zon_u$ represent the time-invariant domains of states and inputs, respectively, corresponding to inherent constraints of the problem and possibly driven by physical limitations. 

\subsection{Reachability Analysis}\label{sec:reach}
Reachability analysis computes the set of states $x_k$, which can be reached given a set of uncertain initial states $\Xx_0$ and a set of uncertain inputs $\Uu_k$. More formally, it can be defined as follows.
\begin{definition}[Exact Reachable Set]
The exact reachable set $\Rr_{N}$ after $N$ time steps subject to inputs ${u_k \in \Uu_k}$, $\forall k \in\{ 0, \dots, N-1\}$, and noise $w( \cdot) \in \zon_w$, is the set of all states trajectories starting from initial set $\Xx_0$ after $N$ steps: 
\begin{align} \label{eq:R}
        \Rr_{N} = \big\{& x(N) \in \R^{n_x} \, \big| x_{k+1} = A x_k + B u_k + w_k, \nonumber\\
       & \, x_0 \in \Xx_0,
        u_k \in \Uu_k, w_k \in \zon_w: \nonumber\\ & \forall k \in \{0,...,N{-}1\}\big\}.
\end{align}
\end{definition}

\subsection{Barrier Certificate}
\label{sec:barr}

A barrier certificate is a real-valued function defined over the system's state space, where its zero level set serves to separate the unsafe region $\Xx_u$ from all potential trajectories originating from a specified set of initial states $\Xx_0$.

\begin{definition}[Barrier Certificate~\cite{prajna2004safety}]
\label{def:bc}
Consider a system as in \eqref{eq:model-linear}. A function $\Bb: \zon_x \rightarrow \R$ is a barrier certificate for a system $\mathcal{S}$ if:
\begin{align}
\label{eq:bc1}
& \Bb(x_k) \leq 0 \quad \forall x_k \in \Xx_0, \\
\label{eq:bc2}
& \Bb(x_k)>0 \quad \forall x_k \in \Xx_u, \text { and } \\
\label{eq:bc3}
& \Bb(A x_k + B u_k +w_k) \leq \Bb(x_k) \quad \forall x_k \in \zon_x \backslash \Xx_u, \nonumber\\ 
& \quad \forall u_k \in \zon_u, \forall w_k \in \zon_w .
\end{align}
\end{definition}
Note that the above definition implies that the barrier certificate $\Bb(x_k)$ is a non-increasing function with respect to the state sequence of the system.

\subsection{Problem formulation}\label{sec:problem-formulation}

Given a set of initial states $\Xx_{0} \subset \zon_x$, a set of unsafe states $\Xx_u \subset \zon_x$, we aim to guarantee that all trajectories of $\mathcal{S}$ that start from $\Xx_{0}$ never reach $\Xx_u$ by employing only the input and noisy state data of the system $\mathcal{S}$ in \eqref{eq:model-linear} and without explicit knowledge of the system matrices. We denote this safety property by $\Psi$, and its satisfaction by $\mathcal{S}$ is written as $\mathcal{S} \models \Psi$.

We have access to past $n_T$ input-state trajectories of different lengths, denoted by $T_i+1$, for $i=1,\dots,n_T$. These trajectories are denoted as $\{ u_k\}^{T_i}_{k=0}$ and $\{ x_k\}^{T_i}_{k=0}$. For notational simplicity, in the theoretical formulation, we consider a single trajectory ($n_T=1$) of adequate length $T$ and collect all the input and noisy state data into the following matrices:
\begin{align*}
     X_{+} &= \begin{bmatrix}  x_1 & x_2 &\cdots & x_T\end{bmatrix}, \\
     X_{-} &= \begin{bmatrix}  x_0& x_1 & \cdots & x_{T-1}\end{bmatrix}, \\
      U_{-} &= \begin{bmatrix} u_0 &  u_1  & \cdots & u_{T-1} \end{bmatrix}.
\end{align*} 
We define $D_-= \begin{bmatrix} X_{-}^\top&U_{-}^\top\end{bmatrix}^\top$ and denote all available past data by $D= \begin{bmatrix} X_{+}^\top&X_{-}^\top &U_{-}^\top\end{bmatrix}^\top$. We consider the following standing assumption necessary for our data-driven approach.

\begin{assumption}\label{ass:zon-noise}
The noise $w_k$ is bounded by a known zonotope, i.e., $w_k\in \zon_w = \langle c_{\zon_w},G_{\zon_w} \rangle$ $\forall\, k\in \mathbb{Z}_{\geq 0}$, which includes the origin. 
\end{assumption}
\begin{assumption}\label{ass:rank_D}
    We assume that the data matrix $D_-$ has full row rank, i.e., $\mathrm{rank}(D_-) = n_x + n_u$.
\end{assumption}

We represent the sequence of unknown noise corresponding to the available input-state trajectories as $\{w_k\}_{k=0}^{T}$. From Assumption~\ref{ass:zon-noise}, it follows that 
\[
W_{-} = \begin{bmatrix} w_0 & \cdots & w_{T-1} \end{bmatrix} \in \mzon_w = \langle C_{\mzon_w}, G_{\mzon_w} \rangle,
\]
where $C_{\mzon_w} \in \R^{n_x \times n_T}$ and $G_{\mzon_w} \in \R^{n_x \times \gamma_\zon n_T}$. Here, $\mzon_w$ denotes the matrix zonotope resulting from the concatenation of multiple noise zonotopes $\zon_w$~\cite{Alanwar2023Datadriven}.

\section{Data-Driven Safety Verification}\label{sec:DDSV}

This section describes a data-driven safety verification process for an unknown linear system in~\eqref{eq:model-linear} subject to noise. Barrier certificates~\cite{prajna2004safety} use a model of the system to verify safety per condition \eqref{eq:bc3}. Due to the presence of noise in the data, no single model can be trusted or made to fit the data precisely to use for the search of a barrier certificate. Instead, we compute a set of models that are consistent with the observed data. Importantly, this set of models is guaranteed to include the true system model, as shown in~\cite{Alanwar2023Datadriven}, and it is represented using a matrix zonotope. The following lemma provides a systematic approach to compute this set of models. As we will show later, one can search for barrier certificates for this set of models methodically.

\begin{lemma}[{\cite[Lemma 1]{Alanwar2023Datadriven}}]\label{lem:setofAB}
Given the input-state trajectories $D$ of the unknown system~\eqref{eq:model-linear}, the matrix zonotope
\begin{align}
    \mzon_\Sigma = (X_{+} \oplus- \mzon_w) D_-^\dagger,
   \label{eq:zonoAB}
\end{align}  
with $D_-=\begin{bmatrix} X_-^\top & U_-^\top \end{bmatrix}^\top$, contains all system dynamics matrices $\begin{bmatrix}A & B \end{bmatrix}$ that are consistent with the data $D$ and the noise bound $\mzon_w$. 
\end{lemma}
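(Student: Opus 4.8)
The plan is to translate the phrase ``consistent with the data $D$ and the noise bound $\mzon_w$'' into a precise matrix equation and then invert that equation using the Moore--Penrose pseudoinverse together with Assumption~\ref{ass:rank_D}.

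First I would stack the one-step dynamics \eqref{eq:model-linear} along the recorded trajectory. Writing $W_- = \begin{bmatrix} w_0 & \cdots & w_{T-1}\end{bmatrix}$, the definitions of $X_+$, $X_-$, $U_-$, and $D_-$ give the identity $X_+ = \begin{bmatrix} A & B\end{bmatrix} D_- + W_-$. Accordingly, a pair $\begin{bmatrix} A & B\end{bmatrix}$ is consistent with $D$ and $\mzon_w$ exactly when there is some admissible noise matrix $W \in \mzon_w$ for which $\begin{bmatrix} A & B\end{bmatrix} D_- = X_+ - W$. I would take this equation as the working definition of the consistency set and show that every such $\begin{bmatrix} A & B\end{bmatrix}$ lies in $\mzon_\Sigma$.

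Next I would solve this equation for $\begin{bmatrix} A & B\end{bmatrix}$. Right-multiplying both sides by $D_-^\dagger$ yields $\begin{bmatrix} A & B\end{bmatrix} D_- D_-^\dagger = (X_+ - W) D_-^\dagger$. The crucial step---and the one place where the hypotheses genuinely bite---is the identity $D_- D_-^\dagger = I_{n_x + n_u}$, which holds precisely because Assumption~\ref{ass:rank_D} guarantees that $D_-$ has full row rank. Hence $\begin{bmatrix} A & B\end{bmatrix} = (X_+ - W) D_-^\dagger$. Without the full-row-rank hypothesis, $D_-^\dagger$ would only recover the component of $\begin{bmatrix} A & B\end{bmatrix}$ acting on the row space of $D_-$, and the conclusion could fail; so I expect verifying this rank condition, and the resulting exactness of the inversion, to be the main obstacle, while the remainder is bookkeeping with zonotope operations.

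Finally I would recognize the right-hand side as an element of the matrix zonotope in \eqref{eq:zonoAB}. As $W$ ranges over $\mzon_w$, the matrix $X_+ - W$ ranges over the Minkowski sum $X_+ \oplus -\mzon_w$, a matrix zonotope with center $X_+ - C_{\mzon_w}$ and generators $-G_{\mzon_w}$. Right-multiplication by the fixed matrix $D_-^\dagger$ is a linear map, and applying the linear-transformation rule for zonotopes sends this set exactly to $\mzon_\Sigma = (X_+ \oplus -\mzon_w) D_-^\dagger$. Therefore every consistent $\begin{bmatrix} A & B\end{bmatrix}$ satisfies $\begin{bmatrix} A & B\end{bmatrix} \in \mzon_\Sigma$, which is the claim. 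As a sanity check that the set is nonempty and useful, the true matrices $\begin{bmatrix} A_{\text{tr}} & B_{\text{tr}}\end{bmatrix}$ are themselves consistent with the choice $W = W_-$, which lies in $\mzon_w$ by Assumption~\ref{ass:zon-noise}; hence the true model is contained in $\mzon_\Sigma$, as asserted in the surrounding text.
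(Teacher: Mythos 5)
Your proof is correct and follows essentially the same argument as the one the paper relies on (the paper itself defers to the cited reference \cite[Lemma 1]{Alanwar2023Datadriven}, whose proof proceeds exactly this way): stack the dynamics as $X_+ = \begin{bmatrix} A & B\end{bmatrix} D_- + W$, use Assumption~\ref{ass:rank_D} to get $D_- D_-^\dagger = I$ and hence $\begin{bmatrix} A & B\end{bmatrix} = (X_+ - W)D_-^\dagger$, and conclude membership in $\mzon_\Sigma$ via the linear-transformation rule for matrix zonotopes. You also correctly identify the full-row-rank condition as the step where the hypothesis is genuinely needed, so there is nothing to add.
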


We compute the reachable regions by propagating the initial set $\Xx_0$ using the set of models $\mzon_\Sigma$ as presented in the following theorem. 

\begin{theorem}[{\cite[Theorem~1]{Alanwar2023Datadriven}}]
\label{th:reach_lin}
Given input-state trajectories $D$ of the system in~\eqref{eq:model-linear}, then the reachable set 
$$\hat{\Rr}_{k+1} = \mzon_{\Sigma} (\hat{\Rr}_{k} \times \Uu_{k}  ) +  \zon_w$$
over-approximates the exact reachable set, i.e., $\hat{\Rr}_{k} \supseteq \Rr_{k}$ starting from $\hat{\Rr}_{0}=\Xx_0$. 
\end{theorem}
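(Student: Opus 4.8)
The plan is to prove the containment $\hat{\Rr}_k \supseteq \Rr_k$ by induction on the time step $k$, with Lemma~\ref{lem:setofAB} supplying the essential link between the \emph{unknown} true dynamics and the computable matrix zonotope $\mzon_{\Sigma}$. The guiding idea is that the exact reachable set propagates states using only the single pair $\begin{bmatrix} A_{\text{tr}} & B_{\text{tr}} \end{bmatrix}$, whereas the over-approximation propagates them using the entire family of data-consistent models $\mzon_{\Sigma}$; since Lemma~\ref{lem:setofAB} guarantees that the true pair belongs to this family, every exactly reachable state is also captured by the recursion.

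For the base case, both sets reduce to the initial set at $k=0$: the recursion initializes $\hat{\Rr}_0 = \Xx_0$, and evaluating the definition~\eqref{eq:R} at $N=0$ gives $\Rr_0 = \Xx_0$, so $\hat{\Rr}_0 \supseteq \Rr_0$ holds with equality. For the inductive step I would assume $\hat{\Rr}_k \supseteq \Rr_k$ and take an arbitrary $x \in \Rr_{k+1}$. By the definition of the exact reachable set there exist $x_k \in \Rr_k$, $u_k \in \Uu_k$, and $w_k \in \zon_w$ such that
\[
x = A_{\text{tr}} x_k + B_{\text{tr}} u_k + w_k = \begin{bmatrix} A_{\text{tr}} & B_{\text{tr}} \end{bmatrix} \begin{bmatrix} x_k \\ u_k \end{bmatrix} + w_k.
\]
The inductive hypothesis gives $x_k \in \hat{\Rr}_k$, so the stacked vector $\begin{bmatrix} x_k^\top & u_k^\top \end{bmatrix}^\top$ lies in the Cartesian product $\hat{\Rr}_k \times \Uu_k$. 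By Lemma~\ref{lem:setofAB}, the true dynamics satisfy $\begin{bmatrix} A_{\text{tr}} & B_{\text{tr}} \end{bmatrix} \in \mzon_{\Sigma}$. Consequently the product $\begin{bmatrix} A_{\text{tr}} & B_{\text{tr}} \end{bmatrix} \begin{bmatrix} x_k^\top & u_k^\top \end{bmatrix}^\top$ is a member of $\mzon_{\Sigma}(\hat{\Rr}_k \times \Uu_k)$, and adding $w_k \in \zon_w$ places $x$ in $\mzon_{\Sigma}(\hat{\Rr}_k \times \Uu_k) + \zon_w = \hat{\Rr}_{k+1}$. This shows $\Rr_{k+1} \subseteq \hat{\Rr}_{k+1}$ and closes the induction.

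The step I expect to require the most care is justifying that the computed set $\mzon_{\Sigma}(\hat{\Rr}_k \times \Uu_k)$ genuinely contains every individual product $M z$ with $M \in \mzon_{\Sigma}$ and $z \in \hat{\Rr}_k \times \Uu_k$. Multiplying a matrix zonotope by a zonotope produces bilinear cross terms between their generators, so the result is in general only \emph{over-approximated} by a zonotope rather than represented exactly. The key observation is that this over-approximation works entirely in our favor: for the containment claim we only need the exact product set to lie inside the computed zonotope, never the reverse. I would therefore state explicitly that both the matrix-zonotope multiplication and the subsequent Minkowski sum with $\zon_w$ are set operations that preserve, and can only enlarge, membership, so no exactly reachable state is ever discarded and the recursion indeed yields a valid over-approximation at every step.
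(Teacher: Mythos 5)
Your proof is correct and matches the standard argument: the paper itself states this theorem by citation to~\cite{Alanwar2023Datadriven} without reproving it, and the proof there proceeds exactly as you do, using Lemma~\ref{lem:setofAB} to place $\begin{bmatrix} A_{\text{tr}} & B_{\text{tr}} \end{bmatrix}$ inside $\mzon_{\Sigma}$ and then propagating containment step by step. Your closing remark about the matrix-zonotope--zonotope product only ever enlarging the exact product set is a nice touch of care that the cited proof treats implicitly.
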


By converting $\mzon_{\Sigma}$ into an interval matrix, we can extract the minimum and maximum possible values of $A$ and $B$, denoted by $A^{min}, A^{max}, B^{min}$ and $B^{max}$, respectively (that is, $[A^{min}\ B^{min}] = M_{\Sigma}^{min}$ and $[A^{max}\ B^{max}] = M_{\Sigma}^{max}$). 
Note that the unknown true model parameters can be bounded as $A^{min} \leq A_{\text{tr}} \leq A^{max}$ and $B^{min} \leq B_{\text{tr}} \leq B^{max}$, where the inequalities are element-wise.
Following the above, as $A_{\text{tr}}$ and $B_{\text{tr}}$ are unknown, condition \eqref{eq:bc3} can be restated using the interval matrix conversion of $\mzon_{\Sigma}$ as:
\begin{align}
\label{eq:bc3updated}
& \Bb(A x_k + B u_k +w_k) \leq \Bb(x_k) \quad \forall x_k \in \zon_x \backslash \Xx_u, \forall u_k \in \zon_u, \nonumber\\ 
& \quad \forall w_k \in \zon_w, \forall A \in [A^{min}, A^{max}], \forall B \in [B^{min}, B^{max}].
\end{align}

While condition \eqref{eq:bc3updated} is a robust condition for safety verification, using it to find a barrier certificate becomes intractable, particularly as the dimension of the system increases due to the large number of unknown parameters. 
To address this issue, we use $\mzon_{\Sigma}$ as follows. 
$\mzon_{\Sigma} = \langle C_{\mzon_{\Sigma}}, G_{\mzon_{\Sigma}}\rangle = \langle  C_{\mzon_{\Sigma}}, \zero \rangle \oplus \langle \zero, G_{\mzon_{\Sigma}} \rangle$. 
We denote $[A_c\ B_c] = C_{\mzon_{\Sigma}}$ and $[A_G\ B_G] \in \langle \zero, G_{\mzon_{\Sigma}} \rangle$. 
The unknown system dynamics $A x_k + B u_k + w_k$ can then be written as $A_c x_k + B_c u_k + (A_G x_k + B_G u_k + w_k)$ where $(A_G x_k + B_G u_k + w_k) \in \zon_d = (\langle \zero, G_{\mzon_{\Sigma}} \rangle)(\zon_{x} \times \zon_{u}) + \zon_{w}$. 
Based on this setup, we provide the following definition for the safety verification of the system using the data-driven bounds provided above.

\begin{definition}
\label{def:bcfordata}
Consider a system as in \eqref{eq:model-linear}. A function $\Bb: \zon_x \rightarrow \R$ is a barrier certificate for this system if:
    \begin{align}
    \label{eq:bcdata1}
    & \Bb(x_k) \leq 0 \quad \forall x_k \in \Xx_0, \\
    \label{eq:bcdata2}
    & \Bb(x_k)>0 \quad \forall x_k \in \Xx_u, \text { and } \\
    \label{eq:bcdata3}
    & \Bb(A_c x_k + B_c u_k + d_k) \leq \Bb(x_k) \quad \forall x_k \in \zon_x \backslash \Xx_u,\nonumber\\ 
    & \quad \forall u_k \in \zon_u, \forall d_k \in \zon_d.
    \end{align}
\end{definition}

We now state the usefulness of Definition \ref{def:bcfordata}.

\begin{theorem}
    Consider a system $\Sys$ as in \eqref{eq:model-linear}. If there exists a function $\Bb: \zon_x \rightarrow \R$ for system $\Sys$ as in Definition \ref{def:bcfordata}, then the system is safe. 
\end{theorem}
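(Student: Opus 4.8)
The plan is to reduce the data-driven barrier certificate of Definition~\ref{def:bcfordata} to the classical barrier certificate of Definition~\ref{def:bc} applied to the \emph{true} (but unknown) system, and then invoke the standard induction argument showing that a classical barrier certificate guarantees safety. The conditions on the initial and unsafe sets already coincide verbatim between the two definitions (\eqref{eq:bcdata1}--\eqref{eq:bcdata2} are identical to \eqref{eq:bc1}--\eqref{eq:bc2}), so the entire content of the proof lies in showing that the decomposed invariance condition \eqref{eq:bcdata3} implies the true-model invariance condition \eqref{eq:bc3}.

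First I would recall, via Lemma~\ref{lem:setofAB}, that the true dynamics matrices satisfy $[A_{\text{tr}}\ B_{\text{tr}}] \in \mzon_\Sigma$. Writing $\mzon_\Sigma = \langle C_{\mzon_\Sigma}, G_{\mzon_\Sigma}\rangle$ with center $[A_c\ B_c] = C_{\mzon_\Sigma}$, this membership yields a decomposition $[A_{\text{tr}}\ B_{\text{tr}}] = [A_c\ B_c] + [A_G\ B_G]$ for some generator term $[A_G\ B_G] \in \langle \zero, G_{\mzon_\Sigma}\rangle$. Substituting into \eqref{eq:model-linear} rewrites a single true transition step as $A_{\text{tr}} x_k + B_{\text{tr}} u_k + w_k = A_c x_k + B_c u_k + d_k$, where $d_k := A_G x_k + B_G u_k + w_k$.

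The key step is the set-membership claim $d_k \in \zon_d$ for every $x_k \in \zon_x$, $u_k \in \zon_u$, and every admissible noise $w_k \in \zon_w$. This follows from the definition $\zon_d = (\langle \zero, G_{\mzon_\Sigma}\rangle)(\zon_x \times \zon_u) + \zon_w$: since $[A_G\ B_G]$ is one element of the generator matrix zonotope and $(x_k, u_k)^\top \in \zon_x \times \zon_u$, the product $A_G x_k + B_G u_k$ lies in $(\langle \zero, G_{\mzon_\Sigma}\rangle)(\zon_x \times \zon_u)$, and the Minkowski sum with $\zon_w$ then absorbs $w_k$. With $d_k \in \zon_d$ in hand, condition \eqref{eq:bcdata3} evaluated at $(x_k, u_k, d_k)$ gives $\Bb(A_{\text{tr}} x_k + B_{\text{tr}} u_k + w_k) = \Bb(A_c x_k + B_c u_k + d_k) \leq \Bb(x_k)$ for all $x_k \in \zon_x \backslash \Xx_u$, which is exactly \eqref{eq:bc3} for the true model. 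I expect this membership to be the main obstacle, since the matrix-zonotope--times--zonotope product defining $\zon_d$ is in general an over-approximation of the exact image set rather than the image itself; the resolution is that an enlargement of $\zon_d$ only \emph{strengthens} \eqref{eq:bcdata3}, so the containment $d_k \in \zon_d$ and hence the implication are preserved.

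Finally, having established that $\Bb$ is a classical barrier certificate for the true system, I would close with a short induction: $x_0 \in \Xx_0$ gives $\Bb(x_0) \leq 0$ by \eqref{eq:bcdata1}; and if $\Bb(x_k) \leq 0$, then $x_k \notin \Xx_u$ by the contrapositive of \eqref{eq:bcdata2}, so that $x_k \in \zon_x \backslash \Xx_u$ and \eqref{eq:bc3} yields $\Bb(x_{k+1}) \leq \Bb(x_k) \leq 0$. Hence $\Bb(x_k) \leq 0$, equivalently $x_k \notin \Xx_u$, for all $k \in \Z_{\geq 0}$, so no trajectory starting in $\Xx_0$ ever enters $\Xx_u$ and $\Sys \models \Psi$.
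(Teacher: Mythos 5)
Your proof is correct and follows essentially the same route as the paper: both reduce condition \eqref{eq:bcdata3} to the true-model condition \eqref{eq:bc3} via the decomposition $A_{\text{tr}} x_k + B_{\text{tr}} u_k + w_k = A_c x_k + B_c u_k + d_k$ with $d_k \in \zon_d$, using the containment $[A_{\text{tr}}\ B_{\text{tr}}] \in \mzon_\Sigma$ from Lemma~\ref{lem:setofAB}. Your write-up is in fact more rigorous than the paper's, which leaves the set-membership argument for $d_k$ and the closing induction implicit.
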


\begin{proof}
    The proof follows from Definition \ref{def:bc}. Conditions \eqref{eq:bc1} and \eqref{eq:bc2} are directly adopted. For condition \eqref{eq:bc3}, $A_c x_k + B_c u_k$ captures the nominal trajectory estimated by the matrix zonotope over-approximation, and $d_k$ captures the ``noise'' from the over-approximation. The term $A_c x_k + B_c u_k + d_k$ thus captures all possible trajectories generated from the matrix zonotope $\mzon_{\Sigma}$ starting from state $x_k$ under input $u_k$. Thus, condition \eqref{eq:bcdata3} replaces condition \eqref{eq:bc3} where the true trajectory $Ax_k + Bu_k + w_k$ is captured in $A_c x_k + B_c u_k + d_k$.
\end{proof}


The next section covers a computational method to find barrier certificates that satisfy conditions \eqref{eq:bcdata1}-\eqref{eq:bcdata3} for the data-driven set of models.

\section{Computation of Safety Certificates}
\label{sec:computation}
This section presents sum-of-squares (SOS) programming as a computational method of searching for barrier certificates for safety verification of our data-driven over-estimated system. 
To find barrier certificates, we first fix the template to be a linear combination of user-defined basis functions:
\begin{align*}
   \Bb(x) = \mathbf{c}^T\mathbf{p}(x) = \sum_{i = 1}^{n} c_i p_i(x), 
\end{align*}
where functions $p_i$ are monomials over the state variable $x$, and $c_1, \ldots, c_n$ are the real coefficients.

The system dynamics \eqref{eq:model-linear} is linear.
We say a set $Q \subseteq \R^n$ is semi-algebraic if it can be defined with the help of a vector of polynomial inequalities $h(x)$ as $Q = \{ x \mid h(x) \geq 0 \}$, where the inequalities are element-wise.
When the initial set $\Xx_0$ and unsafe set $\Xx_u$ are semi-algebraic~\cite{bochnak2013real}, conditions \eqref{eq:bcdata1}-\eqref{eq:bcdata3} can be cast as a collection of SOS constraints in order to compute a polynomial barrier certificate of a predefined degree. We note that (matrix) zonotopes are semi-algebraic sets. As the conversion of (matrix) zonotopes to their halfspace representation can be exponential in the number of generators~\cite{althoff2010computing}, we instead convert them to interval matrices as they are scalable and computationally convenient due to their ease of representation as polynomial inequalities.
\begin{assumption}
    \label{asp:sos}
    The state set $\zon_x$ is a subset of $\mathbb{R}^n$, and the system dynamics \eqref{eq:model-linear} is a linear function of the state $x$ and input $u$.
    Furthermore, sets $\zon_x$, $\Xx_0$, $\Xx_u$, $\zon_u$ and $\zon_d$ are zonotopes and their interval matrix representations can be described as vectors of polynomial inequalities: $\zon_{x,int} = \{x\in \mathbb{R}^n \mid g(x) \geq 0\},\ \Xx_{0,int} = \{x\in \mathbb{R}^n \mid g_0(x) \geq 0\}$, $\Xx_{u,int} = \{x\in \mathbb{R}^n \mid g_u(x) \geq 0\}$, $\zon_{u,int} = \{u\in \mathbb{R}^n \mid g_{in}(u) \geq 0\}$, and $\zon_{d,int} = \{d\in \mathbb{R}^n \mid g_d(d) \geq 0\}$, where $g(\cdot), g_0(\cdot), g_u(\cdot), g_{in}(\cdot)$, and $g_d(\cdot)$ are each a vector of polynomials of the form in \eqref{eq:poly_inequality} and the inequalities are element-wise.
\end{assumption}

Under Assumption \ref{asp:sos}, conditions \eqref{eq:bcdata1}-\eqref{eq:bcdata3} can be formulated as a set of SOS constraints, as follows.

\begin{lemma}
    \label{lem:sos}
    Consider a system given by equation \eqref{eq:model-linear}. Suppose Assumption \ref{asp:sos} holds for this system and there exist constants $\epsilon \in \mathbb{R}_{>0}$, polynomial $\mathcal{B}(x)$ and SOS polynomials $\lambda_0(x),\ \lambda_{u}(x),\ \lambda(x)$ of appropriate dimensions such that:
    \begin{align}
        \label{eq:sos_init}
        &-\Bb(x) - \lambda_0^T(x) g_0(x),\\
        \label{eq:sos_unsf}
        &\Bb(x) - \epsilon - \lambda_{u}^T(x) g_u(x),\\
        \label{eq:sos_dyn}
        &\Bb(x) - \mathcal{B}(A_c x + B_c u + d) - \lambda^T(x,u,d) g(x,u,d),
    \end{align}
    are SOS polynomials where $x$ is the state variable over $\zon_{x,int}$, $u$ is the input variable over $\zon_{u,int}$, $d$ is the ``noise'' variable over $\zon_{d,int}$ and $A_c, B_c$ are the data-driven nominal model parameters of the system. Here, $g(x,u,d)$ is the concatenation of $g(x),g_{in}(u)$, and $g_d(d)$.
    Then the function $\Bb(x)$ is a barrier certificate following Definition \ref{def:bcfordata}. Note that $\epsilon$ is introduced in condition \eqref{eq:sos_unsf} to convert the strict inequality in condition \eqref{eq:bcdata2} to an inclusive inequality.
\end{lemma}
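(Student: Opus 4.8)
The plan is to treat Lemma~\ref{lem:sos} as a standard sum-of-squares (Positivstellensatz-type) relaxation of the three conditions in Definition~\ref{def:bcfordata}, verifying each condition separately. The single principle used throughout is that a polynomial which is SOS is globally nonnegative, and that on the semi-algebraic set where a defining vector inequality $g_\bullet(\cdot) \geq 0$ holds, the inner product $\lambda_\bullet^\top g_\bullet$ is nonnegative whenever each entry of the multiplier $\lambda_\bullet$ is SOS. Dropping these nonnegative multiplier terms on the relevant set then yields exactly the barrier inequalities \eqref{eq:bcdata1}--\eqref{eq:bcdata3}.

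First I would handle the initial-set condition \eqref{eq:bcdata1}. Since \eqref{eq:sos_init} is SOS, it is nonnegative for every $x$, giving $\Bb(x) \leq -\lambda_0^\top(x) g_0(x)$ globally. Restricting to $x \in \Xx_{0,int}$, where $g_0(x) \geq 0$ elementwise, and using that $\lambda_0$ is a vector of SOS polynomials, the term $\lambda_0^\top(x) g_0(x)$ is nonnegative, so $\Bb(x) \leq 0$, which is \eqref{eq:bcdata1}. The unsafe-set condition \eqref{eq:bcdata2} is symmetric: nonnegativity of \eqref{eq:sos_unsf} gives $\Bb(x) \geq \epsilon + \lambda_u^\top(x) g_u(x)$ globally, and on $\Xx_{u,int}$ the multiplier term is nonnegative, leaving $\Bb(x) \geq \epsilon > 0$. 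Here the strictly positive slack $\epsilon$ is precisely the device that turns the nonstrict SOS bound into the strict inequality required by \eqref{eq:bcdata2}.

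Next I would treat the decrease condition \eqref{eq:bcdata3} in the joint variables $(x,u,d)$. Nonnegativity of \eqref{eq:sos_dyn} yields $\Bb(x) - \Bb(A_c x + B_c u + d) \geq \lambda^\top(x,u,d) g(x,u,d)$ everywhere; restricting to the product domain $\zon_{x,int} \times \zon_{u,int} \times \zon_{d,int}$, on which the concatenated vector $g(x,u,d)$ is nonnegative, and using that $\lambda$ is SOS, the right-hand side is nonnegative, so $\Bb(A_c x + B_c u + d) \leq \Bb(x)$. Since $\Bb$ is a polynomial, it is evaluated at the propagated point $A_c x + B_c u + d$ even when that point leaves $\zon_x$, which is unproblematic. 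One point worth recording is that this argument certifies the decrease over all of $\zon_{x,int}$ rather than only over $\zon_x \backslash \Xx_u$ as in \eqref{eq:bcdata3}; since $\zon_x \backslash \Xx_u \subseteq \zon_x$, this is a sound (if conservative) strengthening, and the claimed condition follows by restriction.

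The argument involves no deep obstacle; the only real care needed is bookkeeping. I would make explicit that each $\lambda_\bullet$ is a vector of SOS polynomials of the same length as the corresponding $g_\bullet$, so that the inner products $\lambda_\bullet^\top g_\bullet$ are well defined and are termwise a product of two nonnegative quantities on the relevant set. I would also emphasize that $d$ enters \eqref{eq:sos_dyn} as a free variable ranging over $\zon_{d,int}$, which by the construction preceding Definition~\ref{def:bcfordata} over-approximates the combined model-uncertainty-and-noise term $A_G x + B_G u + w$; verifying the decrease for every such $d$ is exactly what links this SOS certificate back to the whole set of models $\mzon_{\Sigma}$. Collecting the three verified conditions shows that $\Bb$ is a barrier certificate in the sense of Definition~\ref{def:bcfordata}.
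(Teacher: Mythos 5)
Your proof is correct: the paper itself states Lemma~\ref{lem:sos} without proof (treating it as a standard generalized S-procedure argument), and your argument---global nonnegativity of each SOS expression, dropping the nonnegative multiplier terms $\lambda_\bullet^\top g_\bullet$ on the sets where $g_\bullet \geq 0$, and noting that certifying the decrease condition over all of $\zon_{x,int} \supseteq \zon_x \backslash \Xx_u$ is a sound strengthening---is exactly the reasoning the paper implicitly relies on. No gaps; your bookkeeping on the vector-valued multipliers and on $d$ ranging over $\zon_{d,int}$ (which ties the certificate back to the whole model set $\mzon_\Sigma$) is precisely what a written-out proof would need.
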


In the next section, we demonstrate the application of our proposed approach over a case study.

\section{Numerical simulations}
\label{sec:numerical-simulations}
In this section, we demonstrate the application of our approach in a case study. The simulations were conducted on a Windows 11 device equipped with an AMD Ryzen 9 4900HS Mobile Processor and 16\,GB of RAM. 
We used CORA v2020~\cite{Althoff2015ARCH} in MATLAB for the matrix zonotope related computations and transferred the resulting model data to Julia where we used TSSOS~\cite{wang2021tssos} to implement the SOS optimization problem.
We first obtain $\mzon_\Sigma$ and attempt to search for a barrier certificate using conditions \eqref{eq:bcdata1}, \eqref{eq:bcdata2}, and \eqref{eq:bc3updated}. For state $x \in \R^{n_x}, u \in \R^{n_u}$, the total number of parameters according to these conditions is $n_x \times n_x + n_x + n_x \times n_u + n_u + n_x = n_x^2 + n_xn_u + 2n_x + n_u$. This does not scale favorably with respect to the state dimension $n_x$ and input dimension $n_u$. 
With each additional degree of the polynomial template for the barrier certificate, the memory needed scales exponentially over these parameters.
Thus, it was not computationally tractable for our case study beyond a two dimensional system. 
Note that condition \eqref{eq:sos_dyn} in Lemma \ref{lem:sos} is replaced with 
\begin{align*}
    &\Bb(x) - \mathcal{B}(Ax + Bu + w) \\ 
    &\quad - \lambda^T(x,u,w,B,A) g(x,u,w,B,A),
\end{align*}
to incorporate the semi-algebraic set expressions for $w,A$ and $B$ for the optimization process based on condition \eqref{eq:bc3updated}.
Figure \ref{fig:hard2d} shows the resulting sublevel set of the barrier certificate for a two dimensional system along with the designated initial and unsafe states. Table \ref{tab:hard2d} displays the true dynamics used for simulation (data collection) as well as the relevant vectors and matrices used for the SOS optimization.

\begin{figure}[!t]
    \centering    \epsfig{file=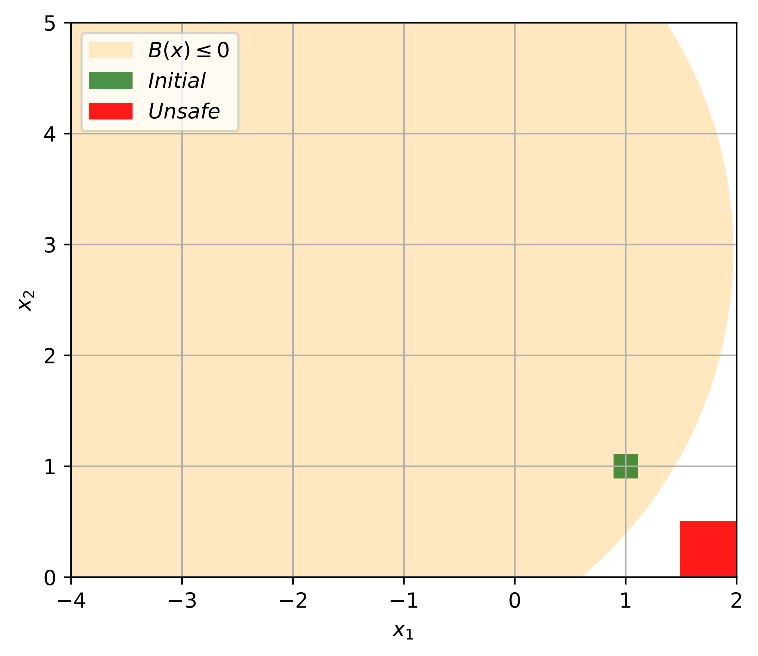,width=0.45\textwidth, keepaspectratio}
    \caption{Barrier Certificate for a 2D System.} 
    \label{fig:hard2d}
\end{figure}

\begin{figure}[!t]
    \centering
  \begin{subfigure}[b]{0.5\textwidth}
    \epsfig{file=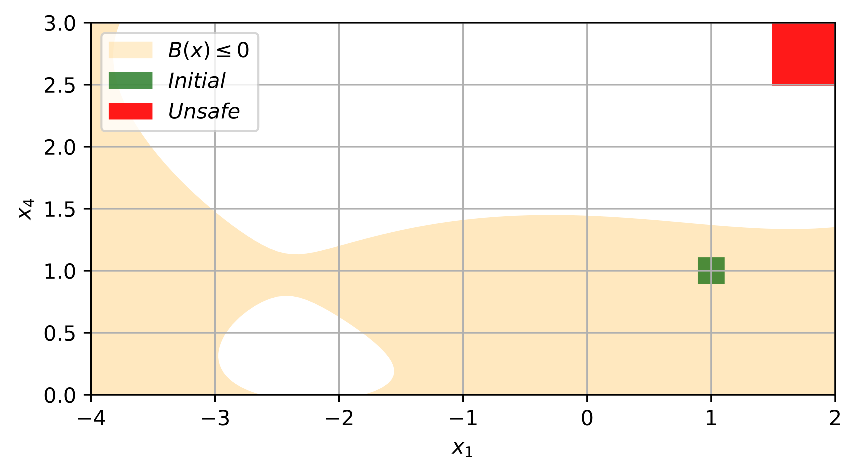, width=\textwidth, keepaspectratio}
  \end{subfigure}\\
  \begin{subfigure}[b]{0.5\textwidth}
    \epsfig{file=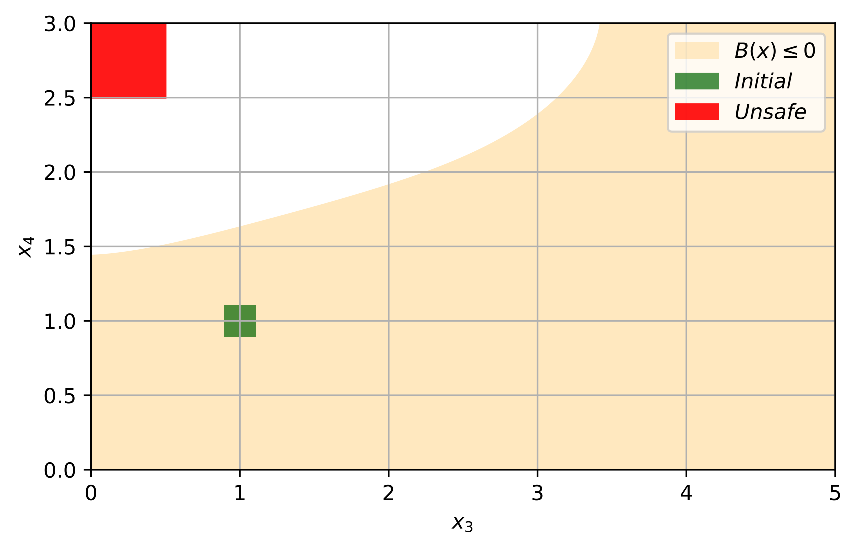, width=\textwidth, keepaspectratio}
  \end{subfigure}
  \caption{Barrier Certificate Projections for a 5D System.}
  \label{fig:relaxed5d}
\end{figure}

We now present our results for a five dimensional system following Definition \ref{def:bcfordata} and using Lemma \ref{lem:sos}.
We were able to find a degree five polynomial barrier certificate. Two examples of the projection of the certificate over two states of the system are shown in Figure \ref{fig:relaxed5d}. Table \ref{tab:relaxed5d} displays the true dynamics used for simulation as well as the relevant vectors and matrices used for the SOS optimization.

\begin{table}[!b]
    \centering
    \caption{Relevant Information for the 2D System}
    \begin{tabular}{|l|l|}
    \hline
     $\zon_{x,int}$ & $[-4,0]^T \leq x \leq [2,5]^T$\\ \hline
     $\zon_{u,int}$ & $9.75 \leq u \leq 10.25$ \\ \hline
     $\zon_{w,int}$ & $-0.005\one_2 \leq w \leq 0.005\one_2$ \\ \hline
     $\Xx_0$ & $0.9\one_2 \leq x \leq 1.1\one_2$ \\ \hline
     $\Xx_u$ & $[1.5,0]^T \leq x \leq [2,2.5]^T$ \\ \hline
     $[A^{min}\ B^{min}]$ & 
     $\begin{matrix}
         0.928 & -0.194 & 0.0418\\
         0.185 & 0.927 & 0.0514
     \end{matrix}$\\ \hline
     $[A^{max}\ B^{max}]$ & 
     $\begin{matrix}
         0.935 & -0.184 & 0.0452\\
         0.192 & 0.937 & 0.0549
     \end{matrix}$ \\ \hline
     $[A_{\text{tr}}\ B_{\text{tr}}]$ & 
     $\begin{matrix}
         0.932 & -0.189 & 0.0436\\
         0.189 & 0.932 & 0.0533
     \end{matrix}$\\ \hline
    \end{tabular}
    \label{tab:hard2d}
\end{table}

\begin{table}[!t]
    \centering
    \caption{Relevant Information for the 5D System}
    \begin{tabular}{|l|l|}
    \hline
     $\zon_{x}^{min}$ & $[-4,0,0,0,0]^T$\\ \hline
     $\zon_{x}^{max}$ & $[2,5,5,3,6]^T$\\ \hline
     $\zon_{u,int}$ & $9.75 \leq u \leq 10.25$ \\ \hline
     $\zon_{d,int}$ & $-2.173\one_5 \leq u \leq 2.173\one_5$ \\ \hline
     $\Xx_0$ & $0.9\one_5 \leq x \leq 1.1\one_5$ \\ \hline
     $\Xx_u^{min}$ & $[1.5,0,0,2.5,0]^T$ \\ \hline
     $\Xx_u^{max}$ & $[2,0.5,0.5,3,0.5]^T$ \\ \hline
     $A^{min}$ & 
     $\begin{matrix}
         0.921 & -0.20 & -0.193 & -0.168 & -0.0877 \\
         0.178 & 0.922 & -0.193 & -0.168 & -0.0877 \\ 
         -0.011 & -0.0106 & 0.667 & -0.125 & -0.0877 \\ 
         -0.011 & -0.0106 & -0.236 & 0.692 & -0.0877 \\ 
         -0.011 & -0.0106 & -0.193 & -0.168 & 0.817 \\ 
     \end{matrix}$\\ \hline
     $B^{min}$ & $[0.0357,0.0454,0.0397,0.0374,0.0397]^T$\\ \hline
     $A^{max}$ & 
     $\begin{matrix}
     0.941 & -0.177 & 0.171 & 0.18 & 0.0989\\
       0.198 & 0.944 & 0.171 & 0.18 & 0.0989 \\ 
       0.0086 & 0.0118 & 1.031 & 0.223 & 0.0989 \\ 
       0.0086 & 0.0118 & 0.128 & 1.04 & 0.0989 \\ 
       0.0086 & 0.0118 & 0.171 & 0.18 & 1.004 \\
     \end{matrix}$ \\ \hline
     $B^{max}$ & $[0.0516,0.0612,0.0555,0.0532,0.0555]^T$\\ \hline
     $A_{c}$ & 
     $\begin{matrix}
     0.931 & -0.188 & -0.0109 & 0.00579 & 0.00557 \\
        0.188 & 0.933 & -0.0109 & 0.0058 & 0.0056 \\ 
        -0.0012 & 0.00061 & 0.849 & 0.0488 & 0.00557 \\ 
        -0.0012 & 0.00061 & -0.0539 & 0.865 & 0.00557 \\ 
        -0.0012 & 0.00061 & -0.0109 & 0.00579 & 0.91 \\
     \end{matrix}$ \\ \hline
     $B_{c}$ & $[0.0437,0.0533,0.0476,0.0453,0.0476]^T$\\ \hline
     $A_{\text{tr}}$ & 
     $\begin{matrix}
        0.1890 & 0.9323 & 0 & 0 & 0 \\
         0 & 0 & 0.8596 & 0.0430 & 0 \\
         0 & 0 & -0.0430 & 0.8596 & 0 \\
         0 & 0 & 0 & 0 & 0.9048 \\
     \end{matrix}$ \\ \hline
     $B_{\text{tr}}$ & $[0.0436,0.0533,0.0475,0.0453,0.0476]^T$\\ \hline
    \end{tabular}
    \label{tab:relaxed5d}
\end{table}

\section{Conclusion} \label{sec:conclusion}

This work presents a data-driven framework for safety verification of cyber-physical systems, addressing challenges posed by model uncertainty, noise, and environmental disturbances. By leveraging matrix zonotopes to construct a set of possible system models, our approach ensures that all dynamics consistent with the observed data are accounted for, providing a robust alternative to traditional model-based methods. The integration of this representation with barrier certificates enables rigorous safety guarantees without requiring explicit knowledge of the system model. Numerical experiments validate the effectiveness of our method in verifying safety for dynamical systems with unknown models. We plan to adopt our method for enhanced model parameters and generalize it for polynomial nonlinear control systems. Furthermore, we intend to investigate safe controller synthesis for a given set of models using barrier certificates by building upon this work.

\bibliographystyle{alpha}
\bibliography{ref}

\end{document}